\newtheorem{theorem}{Theorem}
\newtheorem{lemma}{Lemma}
\newenvironment{proof}[1][Proof]{\begin{trivlist}
\item[\hskip \labelsep {\bfseries #1}]}{\end{trivlist}}
\def\H{\mathcal{H}}
\def\KL{\mathrm{KL}}
\def\X{\mathcal{X}}
\def\C{\mathcal{C}}
\def\tx{\tilde{x}}
\def\tq{\tilde{q}}
\def\tp{\tilde{p}}
\def\ta{\tilde{a}}
\def\tg{\tilde{g}}
\def\tc{\tilde{c}}
\def\th{\tilde{h}}
\def\tH{\tilde{H}}
\author{Frank Nielsen\thanks{A first version of this paper appeared in IEEE Signal Processing Letters~\protect\cite{JeffreysCentroidSPL-2013}. This revision includes a faster fixed-point technique in Section~\ref{sec:fpi} and the R source code. \protect\url{www.informationgeometry.org} (5793b870)}\\
Sony Computer Science Laboratories, Inc.\\
3-14-13 Higashi Gotanda, 141-0022 Shinagawa-ku, Tokyo, Japan 
}
\title{On the symmetrical Kullback-Leibler Jeffreys centroids }
\date{}
\begin{document}

\maketitle

\begin{abstract}
Due to the success of the bag-of-word modeling paradigm, clustering histograms has become an important ingredient of modern information processing.
Clustering histograms can be performed using the celebrated $k$-means centroid-based algorithm.
From the viewpoint of applications, it is usually required to deal with symmetric distances. 
In this letter, we consider the Jeffreys divergence that symmetrizes the Kullback-Leibler divergence, and investigate the computation of  Jeffreys centroids.
We first prove that the Jeffreys centroid can be expressed {\em analytically}  using the Lambert $W$ function for {\em positive} histograms.
We then show how to obtain a fast guaranteed approximation when dealing with {\em frequency} histograms.
Finally, we conclude with some remarks on the $k$-means histogram clustering.
\end{abstract}

\section{Introduction: Motivation and prior work}

\subsection{Motivation: The Bag-of-Word modeling paradigm}

{Classifying} documents into categories is a common task of information retrieval systems:
Given a training set of  documents labeled with categories, one asks to classify incoming new documents.
Text categorization~\cite{TextCategorization-2003} proceeds by  first defining a dictionary of words (the corpus).
It then models each document by a {\em word count} yielding a  word histogram per document.
Defining a proper distance $d(\cdot,\cdot)$ between histograms allows one to:

\begin{itemize}

\item Classify a new on-line document: we first calculate its histogram signature and then seek for the labeled document which has the {\em most similar} histogram to deduce its tag ({\it e.g.}, using a nearest neighbor classifier). 

\item Find the initial set of categories: we  cluster all document histograms and assign a category per cluster.

\end{itemize}

It has been shown experimentally that the  Jeffreys divergence (symmetrizing the Kullback-Leibler divergence) achieves better performance than the traditional {\em tf-idf} method~\cite{TextCategorization-2003}.
This text classification method based on the Bag of Words (BoWs) representation has also been instrumental in computer vision for
efficient object categorization~\cite{bok-2004} and recognition in natural images. 
It first requires to create a dictionary of ``visual words'' by quantizing keypoints   of the training database. 
Quantization is then performed using the $k$-means algorithm~\cite{bregmankmeans-2005} that partitions $n$ data points $\X=\{x_1, ..., x_n\}$ into $k$ clusters $\C_1, ..., \C_k$ where each data element belongs to the closest cluster center. 
From a given initialization, Lloyd's batched $k$-means first assigns points to their closest cluster centers, and then
update the cluster centers, and reiterate this process until convergence is met after a finite number of steps.
When the distance function $d(x,y)$ is chosen as the squared Euclidean distance $d(x,y)=\|x-y\|^2$, the cluster centroids updates to their   centers of mass. 
Csurka et al.~\cite{bok-2004} used the squared Euclidean distance for building the visual vocabulary.
Depending on the considered features, other distances have proven useful:
For example, the Jeffreys divergence was shown to perform experimentally better than the Euclidean or squared Euclidean distances for  Compressed Histogram of Gradient descriptors~\cite{SKL-CHOG-2012}.
To summarize, $k$-means histogram clustering with respect to the Jeffreys divergence can be used to both quantize visual words to create a dictionary and to cluster document words for assigning initial categories.

Let $w_h=\sum_{i=1}^d h^i$ denote the cumulative sum of the bin values of histogram $h$.
We distinguish between {\em positive histograms} and {\em frequency histograms}.
A frequency histogram $\tilde{h}$ is a unit histogram ({\it i.e.}, the cumulative sum $w_{\tilde{h}}$ of its bins adds up to one).
In statistics, those positive and frequency histograms correspond respectively to {\em positive discrete} and {\em multinomial} distributions when all bins are non-empty.
Let $\H=\{h_1, ..., h_n\}$ be a collection of $n$ histograms with $d$ positive-valued bins.
By notational convention, we use the superscript and the subscript to indicate the bin number and the histogram number, respectively.
Without loss of generality, we assume that all bins are non-empty\footnote{Otherwise, we may add an arbitrary small quantity $\epsilon>0$ to all bins.
When frequency histograms are required, we then re-normalize.}: $h_j^i\geq 0, 1\leq j\leq n, 1\leq i\leq d$.
To measure the distance between two such histograms $p$ and $q$, we rely on the {\em relative entropy}.
The {\em extended KL divergence}~\cite{bregmankmeans-2005} between two positive (but not necessarily normalized) histograms $p$ and $q$ is defined by $\KL(p:q) = \sum_{i=1}^d  p^i\log\frac{p^i}{q^i} + q^i - p^i$.
Observe that this information-theoretic dissimilarity measure is not symmetric nor does it satisfy the triangular inequality property of metrics. 
Let $\tilde{p}=\frac{p}{\sum_{i=1}^d p^i}$ and $\tilde{q}=\frac{q}{\sum_{i=1}^d q^i}$ denote the corresponding normalized frequency histograms.
In the remainder, the $\tilde{\ }$ denotes this normalization operator.
The extended KL divergence formula applied to normalized histograms yields the traditional KL divergence~\cite{bregmankmeans-2005}:
$\KL(\tilde{p}:\tilde{q}) = \sum_{i=1}^d  \tilde{p}^i\log\frac{\tilde{p}^i}{\tilde{q}^i}$ since $\sum_{i=1}^d \tilde{q}^i - \tilde{p}^i=\sum_{i=1}^d \tilde{q}^i-\sum_{i=1}^d \tilde{p}^i=1-1=0$.
The KL divergence is interpreted as the {\em relative entropy} between $\tilde{p}$ and $\tilde{q}$: $\KL(\tp:\tq)= H^\times(\tp:\tq)-H(\tp)$,
where $H^\times(\tp:\tq)= \sum_{i=1}^d \tilde{p}^i\log\frac{1}{\tilde{q}^i}$ denotes the {\em cross-entropy} and $H(\tp)=H^\times(\tp:\tp)=\sum_{i=1}^d \tilde{p}^i\log\frac{1}{\tilde{p}^i}$ is the {\em Shannon entropy}. This distance is explained as the expected extra number of bits per datum that must be transmitted when using the ``wrong'' distribution $\tq$ instead of the true distribution $\tp$. Often $\tp$ is hidden by nature and need to be hypothesized while $\tq$ is estimated.
When clustering histograms, all histograms play the {\em same role}, and it is therefore better to consider the
Jeffreys~\cite{JeffreysPrior-1946} divergence $J(p,q) = \KL(p:q) + \KL(q:p)$ that symmetrizes the KL divergence:

\begin{equation}\label{eq:jdiv}
J(p,q) = \sum_{i=1}^d  (p^i-q^i)\log\frac{p^i}{q^i} = J(q,p).
\end{equation}
Observe that the formula for Jeffreys divergence holds for arbitrary positive histograms (including frequency histograms).

This letter is devoted to compute efficiently the {\em Jeffreys centroid} $c$ of a set $\H=\{h_1, ...,h_n\}$ of weighted histograms defined as:
\begin{equation}\label{eq:minfrequency}
c= \arg\min_{x} \sum_{j=1}^n \pi_j J(h_j,x),
\end{equation}
where the $\pi_j$'s denote the histogram positive weights (with $\sum_{j=1}^n \pi_j=1$).
When all histograms $h_j\in\H$ are normalized, we require the minimization of $x$ to be carried out over $\Delta_{d}$, the $(d-1)$-dimensional probability simplex. This yields the {\em Jeffreys frequency centroid} $\tc$.
Otherwise, for positive histograms  $h_j\in\H$, the minimization of $x$ is done over the positive orthant $\mathbb{R}_+^d$, to get
the {\em Jeffreys positive centroid} $c$.
Since the $J$-divergence is convex in both arguments, both the Jeffreys positive and frequency centroids are unique.

\subsection{Prior work and contributions} 
On one hand, clustering histograms has been studied using various distances and clustering algorithms.
Besides the classical Minkowski $\ell_p$-norm distances,  hierarchical clustering with respect to the $\chi^2$ distance has been investigated in~\cite{Chi2-1995}.  
Banerjee et al.~\cite{bregmankmeans-2005} generalized $k$-means to Bregman $k$-means thus allowing to cluster distributions of the same exponential families  with respect
to the KL divergence.
Mignotte~\cite{kMeansBhat-2008} used $k$-means with respect to the Bhattacharyya distance~\cite{2011-brbhat} on histograms of various color spaces to perform image segmentation. 
On the other hand, Jeffreys $k$-means has not been yet extensively studied as the   centroid computations are non-trivial:
In 2002, Veldhuis~\cite{Veldhuis-2002} reported an iterative Newton-like algorithm to approximate arbitrarily finely the Jeffreys frequency centroid $\tc$ of a set of frequency histograms that requires two nested loops.
Nielsen and Nock~\cite{2009-BregmanCentroids-TIT} considered the information-geometric structure of the manifold of multinomials (frequency histograms) to report a simple geodesic bisection search algorithm ({\em i.e.}, replacing the two nested loops of~\cite{Veldhuis-2002}   by one single loop). 
Indeed, the family of frequency histograms belongs to the exponential families~\cite{bregmankmeans-2005}, and the Jeffreys frequency centroid amount to compute equivalently a symmetrized Bregman centroid~\cite{2009-BregmanCentroids-TIT}.

To overcome the explicit computation of the Jeffreys centroid, Nock et al.~\cite{MixedBregmanClustering-2008} generalized the Bregman $k$-means~\cite{bregmankmeans-2005} and $k$-means++ seeding using {\it mixed Bregman divergences}:
They consider two dual centroids $\tc_m$ and $\tc^*_m$  attached per cluster, and use the following   divergence depending on these two centers:
$\Delta\KL(\tc_m:x:\tc_m^*) = \KL(\tc_m:x) + \KL(x:\tc_m^*)$.
However, note that this mixed Bregman $2$-centroid-per-cluster clustering is {\em different} from the Jeffreys $k$-means clustering that relies on one centroid per cluster.

This letter is organized as follows: 
Section~\ref{sec:unnormalized} reports a closed-form expression of the  positive Jeffreys centroid for a set of positive  histograms. 
Section~\ref{sec:normalized} studies the guaranteed tight approximation factor obtained when normalizing the positive Jeffreys centroid, and further describes a simple bisection algorithm to arbitrarily finely approximate the optimal Jeffreys frequency centroid.
Section~\ref{sec:exp} reports on our experimental results that show that our normalized approximation is in practice tight enough to avoid doing the bisection process. Finally, Section~\ref{sec:concl} concludes this work.

\section{Jeffreys positive centroid\label{sec:unnormalized}}

We consider a set $\H=\{h_1, ..., h_n\}$ of $n$ positive weighted histograms with $d$ non-empty bins ($h_j\in\mathbb{R}_+^d$, $\pi_j>0$ and $\sum_j \pi_j=1$).
The  {\em Jeffreys positive centroid} $c$ is defined by:

\begin{equation}
c =\arg\min_{x\in\mathbb{R}_+^d} J(\H,x) = \arg\min_{x\in\mathbb{R}_+^d} \sum_{j=1}^n \pi_j J(h_j,x).\label{eq:min}
\end{equation}

We state the first result:
\begin{theorem}
The  Jeffreys positive centroid $c=(c^1, ..., c^d)$ of a set $\{h_1, ..., h_n\}$ of $n$ weighted  positive histograms with $d$ bins can be calculated component-wise {\em exactly} using the Lambert $W$ analytic function:
$c^i = \frac{{a}^i}{W(\frac{{a}^i}{{g}^i} e)}$,
where ${a}^i=\sum_{j=1}^n \pi_j h_j^i$ denotes the coordinate-wise arithmetic weighted means and ${g}^i=\prod_{j=1}^n  (h_j^i)^{\pi_j}$ the coordinate-wise geometric weighted means.
\end{theorem}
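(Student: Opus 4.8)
The plan is to exploit separability of the objective and then solve a one–dimensional stationarity equation. First I would expand the weighted Jeffreys loss using the definition in~\eqref{eq:jdiv}: since $J(h_j,x)=\sum_{i=1}^d (h_j^i-x^i)\log\frac{h_j^i}{x^i}$, the function $L(x)=\sum_{j=1}^n \pi_j J(h_j,x)$ decomposes as $L(x)=\sum_{i=1}^d L^i(x^i)$ where $L^i(t)=\sum_{j=1}^n \pi_j (h_j^i-t)\log\frac{h_j^i}{t}$. Hence the minimization over $\mathbb{R}_+^d$ reduces to $d$ independent scalar minimizations over $\mathbb{R}_+$, and because $J$ is convex in its second argument each $L^i$ is strictly convex, so the minimizer is the unique critical point.

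Next I would compute $\frac{\mathrm{d}}{\mathrm{d}t}L^i(t)$. Writing $L^i(t)=\sum_j \pi_j\big(h_j^i\log h_j^i - h_j^i\log t - t\log h_j^i + t\log t\big)$ and differentiating term by term, the $h_j^i\log h_j^i$ terms vanish, the $-h_j^i\log t$ term contributes $-\sum_j\pi_j h_j^i/t = -a^i/t$, the $-t\log h_j^i$ term contributes $-\sum_j \pi_j\log h_j^i = -\log g^i$, and the $t\log t$ term contributes $\sum_j \pi_j(\log t + 1)=\log t + 1$. Setting the derivative to zero gives the stationarity equation $\log t + 1 - \log g^i - a^i/t = 0$, i.e. $\log\frac{t}{g^i} + 1 = \frac{a^i}{t}$.

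The remaining step is to solve this transcendental equation in closed form via the Lambert $W$ function, defined by $W(z)e^{W(z)}=z$. I would substitute $u = a^i/t$, so $t = a^i/u$, turning the equation into $\log\frac{a^i}{u g^i} + 1 = u$, hence $\frac{a^i}{g^i} = u e^{u-1}$, i.e. $\frac{a^i}{g^i}e = u e^{u}$. By definition of $W$ this yields $u = W\!\left(\frac{a^i}{g^i}e\right)$, and back-substituting gives $c^i = t = \frac{a^i}{W\left(\frac{a^i}{g^i}e\right)}$, as claimed. I would note that since $a^i,g^i>0$ the argument $\frac{a^i}{g^i}e$ is positive, so we are on the principal branch $W_0$ where $W$ is real, single-valued and analytic, which also confirms $c^i>0$ and that the critical point indeed lies in the open positive orthant.

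The only genuinely delicate point is the algebraic manipulation into the exact form $u e^u = (a^i/g^i)e$ — getting the factor of $e$ in the right place and choosing the substitution $u=a^i/t$ rather than something else is what makes the Lambert $W$ appear cleanly; everything else (separability, convexity, differentiation) is routine. One should also briefly remark that by the AM–GM inequality $a^i\ge g^i$, so $\frac{a^i}{g^i}e\ge e$ and $W\!\left(\frac{a^i}{g^i}e\right)\ge 1$, giving $c^i\le a^i$, which sanity-checks the formula against the Euclidean intuition and the known positive-centroid behavior.
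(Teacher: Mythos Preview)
Your proposal is correct and follows essentially the same route as the paper's own proof: separate the objective coordinate-wise, set the derivative $\log\frac{t}{g^i}+1-\frac{a^i}{t}=0$, and invert via the Lambert $W$ function. Your write-up is simply more explicit about the substitution $u=a^i/t$, the convexity argument for uniqueness, the branch selection, and the AM--GM sanity check, all of which the paper either compresses into a single line or defers to later remarks.
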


\begin{proof}
We seek for $x\in\mathbb{R}_+^d$ that minimizes Eq.~\ref{eq:min}.
After expanding   Jeffreys divergence formula of Eq.~\ref{eq:jdiv} in Eq.~\ref{eq:min} and removing all additive terms independent of $x$, we find
the following equivalent minimization problem:  
$$
\min_{x\in\mathbb{R}_+^d} \sum_{i=1}^d  x^i\log\frac{x^i}{g^i}  -a^i\log x^i.
$$
This  optimization can be performed coordinate-wise, independently.
For each coordinate, dropping the superscript notation and setting the derivative to zero, we have to solve 
$\log \frac{x}{{g}}+1-\frac{{a}}{x}   =0$,
which yields $x=\frac{{a}}{W(\frac{{a}}{{g}} e)}$, where $W(\cdot)$ denotes the Lambert $W$ function~\cite{LambertWfunction-1995}.
\end{proof}

Lambert function\footnote{We consider only the branch $W_0$~\cite{LambertWfunction-1995}   since arguments of the function are always positive.} $W$ is defined by $W(x)e^{W(x)}=x$ for $x\geq 0$. That is, the Lambert function is   the functional inverse of $f(x)=x e^x=y$: $x=W(y)$.
Although function $W$ may seem non-trivial at first sight, it is a popular elementary analytic function similar to the logarithm or exponential functions.
In practice,  we get a fourth-order convergence algorithm to estimate it by implementing Halley's numerical root-finding method.
It requires fewer than $5$ iterations to reach machine accuracy using the IEEE 754 floating point standard~\cite{LambertWfunction-1995}.
Notice that the Lambert $W$ function plays a particular role in information theory~\cite{LambertW}.

\section{Jeffreys frequency centroid\label{sec:normalized}}
We consider a set $\tH$ of $n$ frequency histograms: $\tH=\{\th_1, ..., \th_n\}$.

\subsection{A guaranteed approximation}

If we relax $x$ to the positive orthant  $\mathbb{R}_+^d$ instead of the probability simplex, we get the optimal positive Jeffreys centroid $c$, with 
$c^i = \frac{{a}^i}{W(\frac{{a}^i}{{g^i}} e)}$ (Theorem~1).
Normalizing this positive Jeffreys centroid to get $\tc'=\frac{c}{w_c}$ {\em does not} yield the Jeffreys frequency centroid $\tc$ that requires dedicated iterative optimization algorithms~\cite{Veldhuis-2002,2009-BregmanCentroids-TIT}. 
In this section, we consider approximations of the Jeffreys frequency histograms.
Veldhuis~\cite{Veldhuis-2002} approximated the Jeffreys frequency centroid $\tc$ by
$\tilde{c}'' = \frac{\ta+\tg}{2}$,   
where $\tilde{a}$ and $\tilde{g}$ denotes the normalized weighted arithmetic and geometric  means, respectively. 
The normalized geometric weighted mean $\tg=(\tg^1, ..., \tg^d)$ is defined by $\tilde{g}^i =  \frac{\prod_{j=1}^n (\th_j^i)^{\pi_j}}{\sum_{i=1}^d \prod_{j=1}^n (\th_j^i)^{\pi_j}}, i\in\{1, ..., d\}$.
Since $\sum_{i=1}^d \sum_{j=1}^n \pi_j \th_j^i= \sum_{j=1}^n \pi_j  \sum_{i=1}^d \th_j^i = \sum_{j=1}^n \pi_j= 1$, the normalized arithmetic weighted mean has coordinates: $\tilde{a}^i=\sum_{j=1}^n \pi_j \th_j^i$.

We consider approximating the Jeffreys frequency centroid by normalizing the Jeffreys positive centroid $c$: $\tc'=\frac{c}{w_c}$.

We start with a simple lemma:

\begin{lemma}
The cumulative sum $w_c$ of the bin values of the Jeffreys positive centroid $c$ of a set of frequency histograms is less or equal to one: $0<w_c\leq 1$.
\end{lemma}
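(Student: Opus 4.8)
The plan is to reduce the scalar inequality $w_c \le 1$ to a bin-by-bin comparison and then close the argument with the arithmetic--geometric mean inequality together with the monotonicity of the Lambert $W$ function.

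First I would record the normalization identity available for frequency histograms: since each $\th_j$ sums to one and $\sum_{j=1}^n \pi_j = 1$, the coordinate-wise weighted arithmetic mean $a^i = \sum_{j=1}^n \pi_j \th_j^i$ satisfies $\sum_{i=1}^d a^i = \sum_{j=1}^n \pi_j \sum_{i=1}^d \th_j^i = \sum_{j=1}^n \pi_j = 1$. Consequently it suffices to prove the component-wise bound $c^i \le a^i$ for every $i$, since summing over the $d$ bins then yields $w_c = \sum_{i=1}^d c^i \le \sum_{i=1}^d a^i = 1$.

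Next, I would substitute the closed form $c^i = \frac{a^i}{W(\frac{a^i}{g^i} e)}$ from Theorem~1 (with $g^i = \prod_{j=1}^n (\th_j^i)^{\pi_j}$ the coordinate-wise weighted geometric mean). Because $a^i > 0$ and $W$ takes positive values on positive arguments, the inequality $c^i \le a^i$ is equivalent to $W(\frac{a^i}{g^i} e) \ge 1$. Since the principal branch of $W$ is strictly increasing on $[0,\infty)$ and $W(e) = 1$ (indeed $1 \cdot e^1 = e$), this is in turn equivalent to $\frac{a^i}{g^i} e \ge e$, i.e. to $a^i \ge g^i$ --- which is precisely the weighted AM--GM inequality applied to the nonnegative reals $\th_1^i, \ldots, \th_n^i$ with weights $\pi_1, \ldots, \pi_n$. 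The strict positivity $w_c > 0$ is immediate: all bins are assumed non-empty, so $a^i > 0$ and $W(\frac{a^i}{g^i} e) > 0$, whence $c^i > 0$ for every $i$.

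I do not expect a genuine obstacle here: the argument is a short reduction once the normalization identity is observed. The only facts needing a word of justification are the two standard properties of the Lambert function already recalled in the paper --- that the relevant branch $W_0$ is increasing on the nonnegative reals and that $W_0(e) = 1$ --- and, if a sharper statement is wanted, one may additionally remark that equality $w_c = 1$ holds if and only if $a^i = g^i$ for all $i$, i.e. if and only if the histograms $\th_1, \ldots, \th_n$ all coincide.
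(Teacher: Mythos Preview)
Your proposal is correct and follows essentially the same route as the paper's proof: use the closed form from Theorem~1, invoke the weighted AM--GM inequality $a^i\geq g^i$ to obtain $W(\frac{a^i}{g^i}e)\geq 1$ and hence $c^i\leq a^i$, then sum over $i$ using $\sum_i a^i=1$. Your write-up is slightly more explicit (spelling out the monotonicity of $W$, the value $W(e)=1$, the positivity $w_c>0$, and the equality case), but the argument is the same.
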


\begin{proof}
Consider the frequency histograms $\tH$ as positive histograms.
It follows from Theorem~1 that the Jeffreys positive centroid $c$ is such that $w_c= \sum_{i=1}^d c^i = \sum_{i=1}^d\frac{a^i}{W(\frac{a^i}{g^i} e)}$.
Now, the arithmetic-geometric mean inequality states that $a^i\geq g^i$ where $a^i$ and $g^i$ denotes the
coordinates of the arithmetic and geometric positive means. Therefore $W(\frac{a^i}{g^i} e)\geq 1$ and $c^i\leq a^i$.
Thus $w_c= \sum_{i=1}^d c^i\leq  \sum_{i=1}^d a^i=1$.
\end{proof}
 
We consider approximating Jeffreys frequency centroid on the probability simplex $\Delta_d$ by using the normalization of the  Jeffreys positive centroid:
$\tilde{c}' =  \frac{a^i}{w W(\frac{a^i}{g^i} e)}$,
with $w=\sum_{i=1}^d \frac{a^i}{W(\frac{a^i}{{g}^i} e)}$.
To study the quality of this approximation, we use the following lemma: 

\begin{lemma}
For any histogram $x$ and frequency histogram $\th$, we have
$J(x,\th) = J(\tx,\th) + (w_x-1) (\KL(\tx:\th)+\log w_x)$,
where $w_x$ denotes the normalization factor ($w_x=\sum_{i=1}^d x^i$).
\end{lemma}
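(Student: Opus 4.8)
The identity to prove is
$$
J(x,\th) = J(\tx,\th) + (w_x-1)\bigl(\KL(\tx:\th)+\log w_x\bigr),
$$
so the natural approach is a direct computation: write out both sides using the definition of $J$ in Eq.~\ref{eq:jdiv} applied to a positive histogram $x$ against a frequency histogram $\th$, substitute $x^i = w_x\,\tx^i$, and collect terms. The plan is to expand $J(x,\th)=\sum_i (x^i-\th^i)\log\frac{x^i}{\th^i}$, replace each $x^i$ by $w_x\tx^i$, and use $\log\frac{x^i}{\th^i}=\log\frac{\tx^i}{\th^i}+\log w_x$ to split the logarithm. This breaks the sum into a piece that reassembles $J(\tx,\th)=\sum_i(\tx^i-\th^i)\log\frac{\tx^i}{\th^i}$ after factoring $w_x$ out of the $x^i$ terms, plus leftover pieces involving $\log w_x$ and the mismatch between $w_x\tx^i$ and $\tx^i$.

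The key bookkeeping steps, in order: first, from $\sum_i x^i\log\frac{x^i}{\th^i}$ pull out $w_x$ to get $w_x\sum_i \tx^i\log\frac{\tx^i}{\th^i} + w_x\log w_x\sum_i\tx^i = w_x\,\KL(\tx:\th) + w_x\log w_x$, using $\sum_i\tx^i=1$. Second, from $-\sum_i \th^i\log\frac{x^i}{\th^i}$ similarly split off $-\log w_x\sum_i\th^i = -\log w_x$ (again $\sum_i\th^i=1$), leaving $-\sum_i\th^i\log\frac{\tx^i}{\th^i}$. Third, recognize that $J(\tx,\th) = \KL(\tx:\th)+\KL(\th:\tx)$ and that $\KL(\th:\tx) = \sum_i \th^i\log\frac{\th^i}{\tx^i} = -\sum_i\th^i\log\frac{\tx^i}{\th^i}$, so combining the $\tx$-weighted KL term $w_x\KL(\tx:\th)$ with the $\th$-weighted term $-\sum_i\th^i\log\frac{\tx^i}{\th^i}=\KL(\th:\tx)$ gives $J(\tx,\th) + (w_x-1)\KL(\tx:\th)$. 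Fourth, collect the remaining $\log w_x$ contributions: $w_x\log w_x - \log w_x = (w_x-1)\log w_x$. Adding the two $(w_x-1)$-factored pieces yields exactly $(w_x-1)(\KL(\tx:\th)+\log w_x)$, completing the identity.

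I do not expect a serious obstacle here — the statement is an algebraic rearrangement and the only thing to watch is keeping straight which normalization identities apply ($\sum_i\tx^i=1$ and $\sum_i\th^i=1$, but \emph{not} $\sum_i x^i=1$ since $x$ is an arbitrary positive histogram, which is precisely why the $w_x-1$ correction appears). The mild subtlety is that $J(\tx,\th)$ contains \emph{two} KL terms while the $x$-side naturally produces an asymmetric $w_x\KL(\tx:\th)$ plus a separate $\KL(\th:\tx)$-shaped term, so one must be careful not to double-count: the $\th^i\log\frac{\tx^i}{\th^i}$ term must be assigned to reconstruct $\KL(\th:\tx)$ inside $J(\tx,\th)$ rather than being lumped with the $(w_x-1)$ correction. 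Once that assignment is made, the $\log w_x$ terms fall out cleanly and the proof is a few lines.
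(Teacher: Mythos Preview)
Your proposal is correct and follows essentially the same route as the paper's own proof: substitute $x^i=w_x\tx^i$ into $J(x,\th)=\sum_i(x^i-\th^i)\log\frac{x^i}{\th^i}$, split $\log\frac{w_x\tx^i}{\th^i}=\log\frac{\tx^i}{\th^i}+\log w_x$, and use $\sum_i\tx^i=\sum_i\th^i=1$ to collect the pieces into $J(\tx,\th)+(w_x-1)(\KL(\tx:\th)+\log w_x)$. Your bookkeeping is slightly more explicit than the paper's (you separate the $x^i$-weighted and $\th^i$-weighted halves of the sum before recombining), but the underlying computation is identical.
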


\begin{proof}
It follows from the definition of Jeffreys divergence and the fact that $x^i=w_x \tx^i$ that 
$J(x,\th) =  \sum_{i=1}^d (w_x\tx^i-\th^i)\log \frac{w_x \tx^i}{\th^i}$.
Expanding and mathematically rewriting the rhs. yields $J(x,\th) =   \sum_{i=1}^d  ( w_x\tx^i\log\frac{\tx^i}{\th^i}+w_x\tx^i\log w_x + \th^i\log \frac{\th^i}{\tx^i}   -\th^i\log w_x )
= (w_x-1)\log w_x + J(\tx,\th) + (w_x-1)\sum_{i=1}^d \tx^i\log\frac{\tx^i}{\th^i}=
J(\tx,\th) + (w_x-1) (\KL(\tx:\th) +\log w_x)$,
since $\sum_{i=1}^d \th^i=\sum_{i=1}^d \tx^i=1$.
\end{proof}

The lemma can be extended to a set of weighted frequency histograms $\tilde{\H}$:
$$
J(x,\tH) = J(\tx,\tH) + (w_x-1) (\KL(\tx:\tH)+\log w_x),
$$
where $J(x,\tH)=\sum_{j=1}^n \pi_j J(x,\th_j)$ and $\KL(\tx:\tH)=\sum_{j=1}^n \pi_j \KL(\tx,\th_j)$  (with $\sum_{j=1}^n \pi_j=1$).

We state the second theorem concerning our guaranteed approximation:

\begin{theorem}
Let $\tc$ denote the Jeffreys frequency centroid and $\tc'=\frac{c}{w_c}$ the normalized Jeffreys positive centroid.
Then the approximation factor $\alpha_{\tc'}=
\frac{J(\tc',\tH)}{J(\tc,\tH)}$ is such that
$1\leq  \alpha_{\tc'} \leq 1+(\frac{1}{w_c}-1)\frac{\KL(c,\tH)}{J(c,\tH)} \leq \frac{1}{w_c}$ (with $w_c\leq 1$).
\end{theorem}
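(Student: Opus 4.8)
The plan is to squeeze $\alpha_{\tc'}$ between $1$ and $1/w_c$ by playing the minimality of the two centroids against the scaling decomposition of Lemma~2. The lower bound $\alpha_{\tc'}\geq 1$ is immediate: $\tc'=c/w_c$ is a point of the simplex $\Delta_d$ and $\tc$ minimizes $J(\cdot,\tH)$ over $\Delta_d$, so $J(\tc',\tH)\geq J(\tc,\tH)$.

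For the upper bounds I would first replace $J(\tc,\tH)$ in the denominator by the \emph{smaller} quantity $J(c,\tH)$: since $c$ minimizes $J(\cdot,\tH)$ over $\mathbb{R}_+^d\supseteq\Delta_d$ we have $J(c,\tH)\leq J(\tc,\tH)$, hence $\alpha_{\tc'}\leq J(\tc',\tH)/J(c,\tH)$. Next I would apply the set version of Lemma~2 with $x=c$, so that $\tx=\tc'$ and $w_x=w_c$, to get $J(\tc',\tH)=J(c,\tH)+(1-w_c)\bigl(\KL(\tc':\tH)+\log w_c\bigr)$. The one genuinely computational step is to re-express the bracketed term through $\KL(c,\tH)$: writing $c^i=w_c(\tc')^i$ and using $\sum_i \th_j^i=\sum_i (\tc')^i=1$ in the definition of the extended KL divergence yields $\KL(c,\tH)=w_c\bigl(\KL(\tc':\tH)+\log w_c\bigr)+(1-w_c)$, and therefore $\KL(\tc':\tH)+\log w_c=\bigl(\KL(c,\tH)-(1-w_c)\bigr)/w_c$. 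Substituting this back gives $J(\tc',\tH)=J(c,\tH)+\bigl(\frac{1}{w_c}-1\bigr)\KL(c,\tH)-\frac{(1-w_c)^2}{w_c}$; since $w_c\in(0,1]$ by Lemma~1 the last term is nonnegative, so dropping it and dividing by $J(c,\tH)$ gives the middle bound $\alpha_{\tc'}\leq 1+\bigl(\frac{1}{w_c}-1\bigr)\frac{\KL(c,\tH)}{J(c,\tH)}$.

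The rightmost inequality $1+\bigl(\frac{1}{w_c}-1\bigr)\frac{\KL(c,\tH)}{J(c,\tH)}\leq\frac{1}{w_c}$ reduces, because $\frac{1}{w_c}-1\geq 0$, to proving $\KL(c,\tH)\leq J(c,\tH)$. This is clear from $J(c,\th_j)=\KL(c:\th_j)+\KL(\th_j:c)$ and the nonnegativity of the extended KL divergence (each summand $p^i\log\frac{p^i}{q^i}+q^i-p^i\geq 0$ because $\log u\leq u-1$), averaged with the weights $\pi_j$.

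I expect the only real obstacle to be the middle paragraph: keeping the normalization factor $w_c$ straight while pushing it through the (unnormalized) KL divergence so as to recognize the harmless $\frac{(1-w_c)^2}{w_c}$ correction, and realizing that one should pivot on the positive centroid $c$ — exploiting that it is optimal over the larger domain $\mathbb{R}_+^d$ — rather than trying to manipulate $\tc$ directly. The remaining steps are a one-line optimality argument and a one-line positivity argument.
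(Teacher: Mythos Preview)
Your argument is correct and follows essentially the same route as the paper: decompose $J(\tc',\tH)$ via Lemma~2 with $x=c$, rewrite the bracket using the scaling identity for the extended KL, exploit the optimality of $c$ over $\mathbb{R}_+^d$ (to replace $J(\tc,\tH)$ by $J(c,\tH)$ in the denominator) and of $\tc$ over $\Delta_d$ (for the lower bound), and finish with $\KL(c,\tH)\leq J(c,\tH)$. In fact you are slightly more careful than the paper on the scaling step: the paper writes $\KL(\tc':\tH)=\frac{1}{w_c}\KL(c,\tH)-\log w_c$, silently absorbing the extra $\frac{1-w_c}{w_c}$ coming from the unnormalized KL, whereas you isolate the resulting $\frac{(1-w_c)^2}{w_c}$ correction explicitly before dropping it; both lead to the same stated bound.
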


\begin{proof}
We have $J(c,\tH)\leq J(\tc,\tH)\leq J(\tc',\tH)$. 
Using Lemma~2, since $J(\tc',\tH)=J(c,\tH)+(1-w_c)(\KL(\tc',\tH)+\log w_c))$ and $J(c,\tH)\leq J(\tc,\tH)$, it follows that
$1\leq \alpha_{\tc'}\leq 1+\frac{(1-w_c)(\KL(\tc',\tH)+\log w_c)}{J(\tc,\tH)}$.
We also have $\KL(\tc':\tH)=\frac{1}{w_c} \KL(c,\tH)-\log w_c$ (by expanding the KL expression and using the fact that $w_c=\sum_i c^i$).
Therefore $\alpha_{\tc'}\leq 1+\frac{(1-w_c)\KL(c,\tH)}{w_c J(\tc,\tH)} $.
Since  $J(\tc,\tH)\geq J(c,\tH)$ and $\KL(c,\tH)\leq J(c,\tH)$, we finally obtain $\alpha_{\tc'}\leq \frac{1}{w_c}$.
\end{proof}
When $w_c=1$ the bound is tight.
Experimental results described in the next section shows that this normalized Jeffreys positive centroid $\tc'$ almost coincide with the Jeffreys frequency centroid.

\subsection{Arbitrary fine approximation by bisection search}

It has been shown in~\cite{Veldhuis-2002,2009-BregmanCentroids-TIT} that minimizing Eq.~\ref{eq:minfrequency} over the probability simplex $\Delta_d$ amounts to minimize the following equivalent problem:
\begin{equation}\label{eq:min2}
\tc = \arg\min_{\tx\in\Delta_d} \KL(\ta:\tx)+\KL(\tx:\tg),
\end{equation}
Nevertheless, instead of using the two-nested loops of Veldhuis' Newton scheme~\cite{Veldhuis-2002}, we can design a single loop optimization algorithm.
We consider the Lagrangian function obtained by enforcing the normalization constraint $\sum_i c^i=1$ similar to~\cite{Veldhuis-2002}.
For each coordinate, setting the derivative with respect to $\tilde{c}^i$, we get $\log \frac{\tilde{c}^i}{\tilde{g}^i}+1-\frac{\tilde{a}^i}{\tilde{c}^i} +\lambda =0$,
which solves as $\tilde{c}^i = \frac{\tilde{a}^i}{W\left( \frac{\tilde{a}^i e^{\lambda+1}}{\tilde{g}^i} \right)}$.
By multiplying these $d$ constraints with $\tilde{c}^i$ respectively and summing up, we deduce that  
$\lambda = -\KL(\tilde{c} : \tilde{g} ) \leq 0$ (also noticed by~\cite{Veldhuis-2002}).
From the constraints that all $c_i$'s should be  less than one, we bound $\lambda$ as follows:
$c^i = \frac{ \tilde{a}^i}{ W\left( \frac{\tilde{a}^i e^{\lambda+1}}{\tilde{g}^i} \right)} \leq 1$,
which solves for  equality when $\lambda=\log(e^{\tilde{a}^i} \tilde{g}^i) -1$.
Thus we seek for $\lambda\in [\max_i{\log(e^{\tilde{a}^i} \tilde{g}^i) -1} ,0]$.
Since $s=\sum_i c^i=1$, we have the following cumulative sum equation depending on the unknown parameter $\lambda$:
$s(\lambda) = \sum_i c^i(\lambda) = \sum_{i=1}^d \frac{\tilde{a}^i}{W\left( \frac{\tilde{a}^i e^{\lambda+1}}{\tilde{g}^i} \right)}$.
This is a monotonously decreasing function with $s(0)\leq 1$. We can thus perform a simple bisection search to  approximate the optimal value of $\lambda$, and therefore  deduce an arbitrary fine approximation of the Jeffreys frequency centroid.

\begin{table*}[htbp]

{\small
$$
\begin{array}{|l||l|l|l|l|l|}\hline
&   \alpha_c (\mathrm{opt.\ positive})
&  \alpha_{\tc'} (\mathrm{n'lized\ approx.}) 
&  w_c\leq 1 (\mathrm{n'lizing\ coeff.})
& \alpha_{\tc''} (\textrm{Veldhuis})  \\ \hline\hline
\mathrm{avg} &  0.9648680345638155 & {\bf 1.0002205080964255} & 0.9338228644308926 & 1.065590178484613 \\ \hline
\mathrm{min} &  0.906414219584823 & {\bf 1.0000005079528809} & 0.8342819488534723  & 1.0027707382095195 \\ \hline
\mathrm{max} &  0.9956399220678585 & {\bf 1.0000031489541772} & 0.9931975105809021 & 1.3582296675397754  \\ \hline
\end{array}
$$
}

\caption{Experimental performance ratio and statistics for the $30000+$ images of the Caltech-256 database.
Observe that  $\alpha_c=\frac{J(\H,c)}{J(\H,\tilde{c})}\leq 1$ since the positive Jeffreys centroid (available in closed-form) minimizes the average Jeffreys divergence criterion. Our guaranteed normalized approximation  $\tc'$ is almost optimal.
Veldhuis' simple half normalized arithmetic-geometric approximation performs on average with a $6.56\%$ error but can be far from the optimal in the worst-case ($35.8\%$).
\label{tab}
}

\end{table*}

\subsection{A faster fixed-point iteration algorithm\label{sec:fpi}}
Instead of performing the dichotomic search on $\lambda$ that yields an approximation scheme with {\em linear convergence}, 
we rather consider the fixed-point equation:

\begin{eqnarray} 
\lambda^* &=& -\KL(\tc(\lambda^*):\tg),\label{eq:fpeq}\\
\tc^i(\lambda^*) &=& \frac{\tilde{a}^i}{W\left( \frac{\tilde{a}^i e^{\lambda+1}}{\tilde{g}^i} \right)}, \forall i\in\{1, ..., d\}.
\end{eqnarray}
Notice that Eq.~\ref{eq:fpeq} is a fixed-point equation $x=g(x)$ with $g(x)=-\KL(\tc(x):\tg)$ in $\lambda$ from which the Jeffreys frequency centroid  is recovered: $\tc=\tc(\lambda^*)$.
Therefore, we consider a fixed-point iteration scheme:
Initially, let  $\tc_0=\ta$ (the arithmetic mean) and $\lambda_0=-\KL(\tc_0:\tg)$.
Then we iteratively update $\tc$ and $\lambda$ for $l=1, 2, ...$ as follows:

\begin{eqnarray} 
\tc^i_l &=& \frac{\tilde{a}^i}{W\left( \frac{\tilde{a}^i e^{\lambda_{l-1}+1}}{\tilde{g}^i} \right)},\\
\lambda_{l}&=& -\KL(\tc_l:\tg),
\end{eqnarray}

and reiterate until convergence, up to machine precision, is reached.
We experimentally observed faster convergence than the dichotomic search (see next section). 

In general, existence, uniqueness and convergence rate of fixed-point iteration schemes $x=g(x)$ are well-studied (see~\cite{fpt-2003}, Banach contraction principle, Brouwer's fixed point theorem, etc.).


\section{Experimental results and discussion\label{sec:exp}}

\begin{figure}
\centering
\begin{tabular}{ccc}
\includegraphics[bb=0 0 512 512, width=0.25\textwidth]{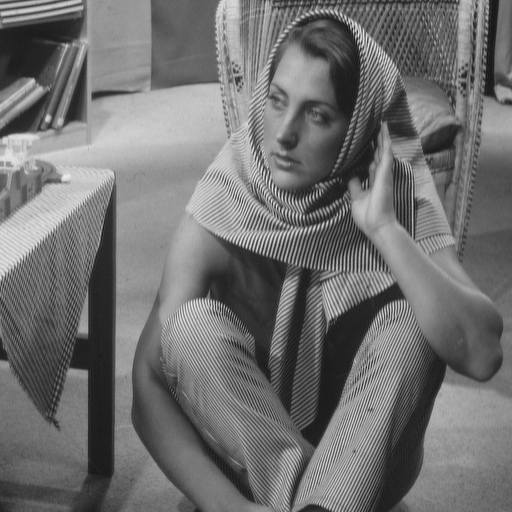} &
\includegraphics[bb=0 0 512 512, width=0.25\textwidth]{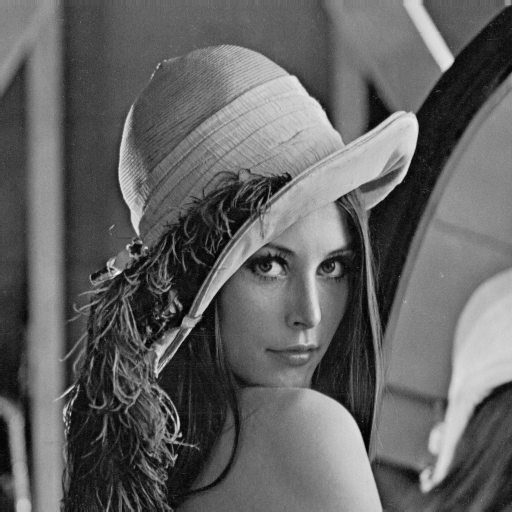} &
\includegraphics[bb=0 0 502 505, width=0.25\textwidth]{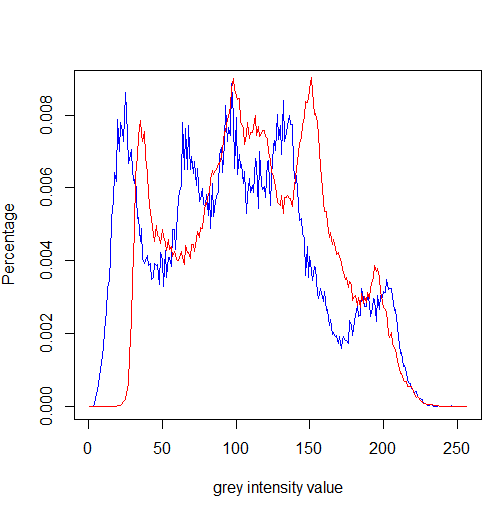} \\ 
(a) & (b) & (c)
\end{tabular}
(d)\includegraphics[bb=0 0 592 562, width=0.5\textwidth]{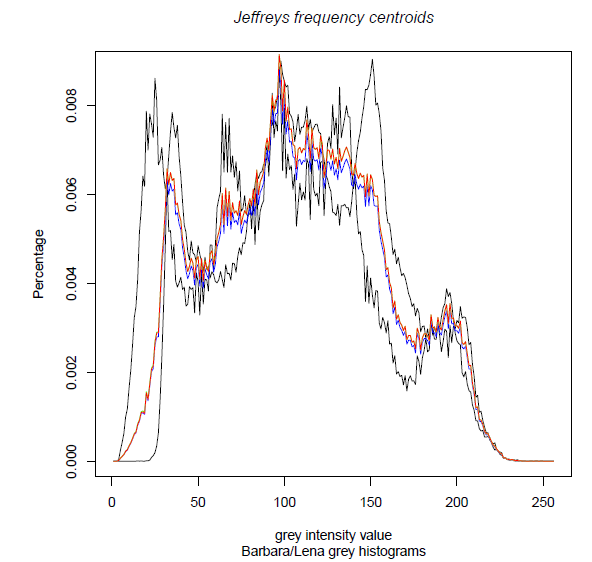}  

\caption{Two grey-valued images: (a) {\tt Barbara} and (b) {\tt Lena}, with their frequency intensity histograms (c) in red and blue, respectively.
(d) The exact positive (blue) and frequency approximated Jeffreys (red) centroids of those two histograms.
\label{fig:twohisto}}
\end{figure}

Figure~\ref{fig:twohisto} displays the positive and frequency histogram centroids of two renown image histograms.
In order to carry out quantitative experiments, we used a multi-precision floating point (\url{http://www.apfloat.org/}) package to 
handle calculations and control arbitrary precisions. A R code is also provided in the Appendix.
We chose the Caltech-256 database~\cite{Caltech256} consisting of $30607$ images labeled into $256$ categories to perform experiments:
We consider the set of intensity\footnote{Converting RGB color pixels to $0.3R+0.596G+0.11B$ I grey pixels.} histograms $\mathcal{H}$.
For each of the $256$ category, we consider the set of histograms falling inside this category and compute the exact Jeffreys positive centroids $c$, its normalization Jeffreys approximation $\tilde{c}'$ and optimal frequency centroids $\tilde{c}$. 
We also consider the  average  of the arithmetic and geometric normalized means $\tilde{c}''=\frac{\ta+\tg}{2}$.
We evaluate the average, minimum and maximum ratio $\alpha_x=\frac{J(\H,x)}{J(\H,\tilde{c})}$ for $x\in\{c,\tc',\tc''\}$.
The results are reported in Table~\ref{tab}.
Furthermore, to study the best/worst/average performance of the the normalized Jeffreys positive centroid $\tilde{c}'$, we ran $10^6$ trials as follows:
We draw two random binary histograms ($d=2$), calculate a fine precision approximation of $\tilde{c}$ using numerical optimization, and calculate the
approximation obtained by using the normalized closed-form centroid $\tilde{c}'$. We gather statistics on the ratio $\alpha=\frac{J(\tilde{c}')}{J(\tilde{c})}\geq 1$.
We find experimentally the following performance:
$\bar\alpha\sim 1.0000009, \alpha_{\mathrm{max}}\sim 1.00181506, \alpha_{\min}=1.000000$.
Although $\tc'$ is almost matching $\tc$ in those two real-world and synthetic experiments, it remains open to express analytically and exactly its worst-case performance.

We compare experimentally the convergence rate of the dichotomic search with the fixed-point iterative scheme:
In the R implementation reported in the Appendix, we set
the precision to \verb|.Machine$double.xmin| (about $2.2250738585072013828 10^{-308}$).  
The double precision floating point number has $52$ binary digits: This coincides with the number of iterations in the bisection search method.
We observe experimentally that the fixed-point iteration scheme requires on average $5$ to $7$ iterations.

\section{Conclusion\label{sec:concl}}

We summarize the two  main contributions of this paper:
(1) we proved that the   Jeffreys positive centroid admits a closed-form formula expressed using the Lambert $W$ function, and
(2) we proved that normalizing this Jeffreys positive centroid yields a tight guaranteed approximation to the Jeffreys frequency centroid.
We noticed experimentally  that the closed-form normalized Jeffreys positive centroid almost coincide with the Jeffreys frequency centroid, and can therefore be used in Jeffreys $k$-means clustering.
Notice that since the $k$-means assignment/relocate algorithm
monotonically converges even if instead of computing the exact cluster centroids we update
it with provably better centroids (i.e., by applying one bisection iteration of Jeffreys frequency
centroid computation or one iteration of the fixed-point algorithm), we end up with a converging  {\em variational} Jeffreys frequency $k$-means that requires to implement a stopping criterion.  
Jeffreys divergence is not the only way to symmetrize the Kullback-Leibler divergence. 
Other KL symmetrizations include the Jensen-Shannon divergence~\cite{Jensen-Shannon-divergence},  the Chernoff divergence~\cite{Chernoff-1952}, 
and a smooth family of symmetric divergences including the Jensen-Shannon and Jeffreys divergences~\cite{symJensen:2010}.


\bibliographystyle{plain}

\begin{thebibliography}{}

\end{thebibliography}


\begin{thebibliography}{10}
\providecommand{\url}[1]{#1}
\csname url@samestyle\endcsname
\providecommand{\newblock}{\relax}
\providecommand{\bibinfo}[2]{#2}
\providecommand{\BIBentrySTDinterwordspacing}{\spaceskip=0pt\relax}
\providecommand{\BIBentryALTinterwordstretchfactor}{4}
\providecommand{\BIBentryALTinterwordspacing}{\spaceskip=\fontdimen2\font plus
\BIBentryALTinterwordstretchfactor\fontdimen3\font minus
  \fontdimen4\font\relax}
\providecommand{\BIBforeignlanguage}[2]{{%
\expandafter\ifx\csname l@#1\endcsname\relax
\typeout{** WARNING: IEEEtran.bst: No hyphenation pattern has been}%
\typeout{** loaded for the language `#1'. Using the pattern for}%
\typeout{** the default language instead.}%
\else
\language=\csname l@#1\endcsname
\fi
#2}}
\providecommand{\BIBdecl}{\relax}
\BIBdecl

\bibitem{JeffreysCentroidSPL-2013}
F.~Nielsen, ``Jeffreys Centroids: A Closed-Form Expression for Positive Histograms and a Guaranteed Tight Approximation for Frequency Histograms''
in \emph{IEEE Signal Processing Letters},   vol. 20, no. 7, pp.657,660, 2013.


 
\bibitem{TextCategorization-2003}
B.~Bigi, ``Using {K}ullback-{L}eibler distance for text categorization,'' in
  \emph{Proceedings of the 25th European conference on IR research (ECIR)},
 \hskip 1em plus 0.5em minus 0.4em\relax 
  Springer-Verlag, 2003, pp. 305--319.

\bibitem{bok-2004}
G.~Csurka, C.~Bray, C.~Dance, and L.~Fan, ``Visual categorization with bags of
  keypoints,'' \emph{Workshop on Statistical Learning in Computer Vision
  (ECCV)}, pp. 1--22, 2004.

\bibitem{SKL-CHOG-2012}
V.~Chandrasekhar, G.~Takacs, D.~M. Chen, S.~S. Tsai, Y.~A. Reznik,
  R.~Grzeszczuk, and B.~Girod, ``Compressed histogram of gradients: A
  low-bitrate descriptor,'' \emph{International Journal of Computer Vision},
  vol.~96, no.~3, pp. 384--399, 2012.
 
\bibitem{JeffreysPrior-1946}
H.~Jeffreys, ``An invariant form for the prior probability in estimation
  problems,'' \emph{Proceedings of the Royal Society of London}, vol. 186, no.
  1007, pp. 453--461, 1946.

\bibitem{Jensen-Shannon-divergence}
J.~Lin, ``Divergence measures based on the {S}hannon entropy,'' \emph{IEEE
  Transactions on Information Theory}, vol.~37, pp. 145--151, 1991.

\bibitem{Chernoff-1952}
H.~Chernoff, ``A measure of asymptotic efficiency for tests of a hypothesis
  based on the sum of observations,'' \emph{Annals of Mathematical Statistics},
  vol.~23, pp. 493--507, 1952.
 
\bibitem{Chi2-1995}
H.~Liu and R.~Setiono, ``Chi2: Feature selection and discretization of numeric
  attributes,'' in \emph{Proceedings of the   IEEE International Conference on
  Tools with Artificial Intelligence (TAI)}.\hskip 1em plus 0.5em minus
  0.4em\relax 1995, pp. 388--391.

\bibitem{bregmankmeans-2005}
A.~Banerjee, S.~Merugu, I.~S. Dhillon, and J.~Ghosh, ``Clustering with
  {B}regman divergences,'' \emph{Journal of Machine Learning Research}, vol.~6,
  pp. 1705--1749, 2005.

\bibitem{kMeansBhat-2008}
M.~Mignotte, ``Segmentation by fusion of histogram-based $k$-means clusters in
  different color spaces,'' \emph{IEEE Transactions on Image Processing (TIP)},
  vol.~17, no.~5, pp. 780--787, 2008.
 
\bibitem{2011-brbhat}
F.~Nielsen and S.~Boltz, ``The {B}urbea-{R}ao and {B}hattacharyya centroids,''
  \emph{IEEE Transactions on Information Theory}, vol.~57, no.~8, pp.
  5455--5466, 2011.

\bibitem{Veldhuis-2002}
R.~N.~J. {Veldhuis}, ``The centroid of the symmetrical {K}ullback-{L}eibler
  distance,'' \emph{IEEE signal processing letters}, vol.~9, no.~3, pp. 96--99,
 2002.

\bibitem{2009-BregmanCentroids-TIT}
F.~Nielsen and R.~Nock, ``Sided and symmetrized {B}regman centroids,''
  \emph{IEEE Transactions on Information Theory}, vol.~55, no.~6, pp.
  2048--2059, June 2009.

\bibitem{MixedBregmanClustering-2008}
R.~Nock, P.~Luosto, and J.~Kivinen, ``Mixed {B}regman clustering with
  approximation guarantees,'' in \emph{Proceedings of the European conference
  on Machine Learning and Knowledge Discovery in Databases}.\hskip 1em plus
  0.5em minus 0.4em\relax  Springer-Verlag, 2008, pp.
  154--169.
 
\bibitem{LambertWfunction-1995}
\BIBentryALTinterwordspacing
D.~A. Barry, P.~J. Culligan-Hensley, and S.~J. Barry, ``Real values of the
  {$W$}-function,'' \emph{ACM Trans. Math. Softw.}, vol.~21, no.~2, pp.
  161--171,  1995. 
\BIBentrySTDinterwordspacing

\bibitem{LambertW}
\BIBentryALTinterwordspacing
W.~Szpankowski, ``On asymptotics of certain recurrences arising
in universal coding,''  \emph{Problems of Information Transmission}, vol~34, no~2, pp. 142--146, 1998.
\BIBentrySTDinterwordspacing

\bibitem{Caltech256}
\BIBentryALTinterwordspacing
G.~Griffin, A.~Holub, and P.~Perona, ``Caltech-256 object category dataset,''
  California Institute of Technology, Tech. Rep. 7694, 2007.  
\BIBentrySTDinterwordspacing

\bibitem{symJensen:2010}
\BIBentryALTinterwordspacing
F. Nielsen, ``A family of statistical symmetric divergences based on Jensen's inequality,''
CoRR 1009.4004, 2010.

\bibitem{fpt-2003}
A. Granas and J. Dugundji, ``Fixed Point Theory,''
  \emph{Springer Monographs in Mathematics}, 2003.
    
  
\end{thebibliography}

\section{R code}

\lstinputlisting{JeffreysCentroids.R}

\end{document}